\documentclass[11pt]{article}

\usepackage[T5]{fontenc}

\usepackage{amsmath,amssymb,amsthm,algpseudocode,float,caption,mathtools,thmtools,thm-restate,scalerel,stackengine,verbatim}
\usepackage{algorithm,algpseudocode}
  
\stackMath
\newcommand\reallywidehat[1]{%
\savestack{\tmpbox}{\stretchto{%
  \scaleto{%
      \scalerel*[\widthof{\ensuremath{#1}}]{\kern-.6pt\bigwedge\kern-.6pt}%
          {\rule[-\textheight/2]{1ex}{\textheight}}
            }{\textheight}%
            }{0.5ex}}%
            \stackon[1pt]{#1}{\tmpbox}%
            }
            \parskip 1ex

\usepackage[margin=1in]{geometry}
\usepackage{multirow}
\usepackage{subcaption}
\usepackage{afterpage}
\usepackage{bm}
\usepackage[hidelinks]{hyperref}
\usepackage{tikz}
\usetikzlibrary{decorations.pathreplacing, positioning}

\usepackage{bbm}
\usepackage{color}

\newcommand{\wt}[1]{\widetilde{#1}}

\newcommand{\abs}[1]{\left|#1\right|}

\DeclarePairedDelimiter\brac{\lbrack}{\rbrack}
\DeclarePairedDelimiter\set{\lbrace}{\rbrace}
\DeclarePairedDelimiter\paren{\lparen}{\rparen}

\newcommand{\E}[2][]{\operatorname*{\mathbb{E}}_{#1 }\brac*{#2}}

\newcommand{\Var}[2][]{\operatorname*{\normalfont{\text{Var}}}_{#1 }\paren*{#2}}
\newcommand{\bO}[1]{\operatorname*{O}\paren*{#1}}
\newcommand{\bOt}[1]{\operatorname*{\wt{O}}\paren*{#1}}
\newcommand{\bOm}[1]{\operatorname*{\Omega}\paren*{#1}}

\newcommand{\bTht}[1]{\operatorname*{\wt{\Theta}}\paren*{#1}}

\newcommand{\fb}{\mathbf{f}}
\newcommand{\gb}{\mathbf{g}}

\newcommand{\hb}{\mathbf{h}}

\newcommand{\obT}{\mathbf{\overline{T}}}

\newcommand{\peq}{\mathrel{+}=}

\newcommand{\return}{\textbf{return}~}

\newcommand{\bool}{{\{0, 1\}}}

\DeclareMathOperator{\poly}{poly}

\DeclareMathOperator{\polylog}{\normalfont{\text{polylog}}}

\newtheorem{theorem}{Theorem}
\newtheorem*{theorem*}{Theorem}
\newtheorem{lemma}[theorem]{Lemma}

\newtheorem*{definition*}{Definition}
\newtheorem*{lemma*}{Lemma}

\newtheorem*{corollary*}{Corollary}

\newtheorem*{claim*}{Claim}

\newtheorem{remark}[theorem]{Remark}

{\makeatletter
	\gdef\xxxmark{%
		\expandafter\ifx\csname @mpargs\endcsname\relax 
		\expandafter\ifx\csname @captype\endcsname\relax 
		\marginpar{xxx}
		\else
		xxx 
		\fi
		\else
		xxx 
		\fi}
	\gdef\xxx{\@ifnextchar[\xxx@lab\xxx@nolab}
	\long\gdef\xxx@lab[#1]#2{{\bf [\xxxmark #2 ---{\sc #1}]}}
	\long\gdef\xxx@nolab#1{{\bf [\xxxmark #1]}}
}

\title{An Optimal Algorithm for Triangle Counting in the Stream}
\date{}

\author{Rajesh Jayaram\\ CMU \\\texttt{rkjayara@cs.cmu.edu} \and John Kallaugher\\UT Austin\\ \texttt{jmgk@cs.utexas.edu}
}

\begin{document}
\maketitle
\begin{abstract}
\noindent
We present a new algorithm for approximating the number of triangles in a graph $G$ whose edges arrive as an arbitrary order stream. If $m$ is the number of edges in $G$, $T$ the number of triangles, $\Delta_E$ the maximum number of triangles which share a single edge, and $\Delta_V$ the maximum number of triangles which share a single vertex, then our algorithm requires space:
\[	\bOt{\frac{m}{T}\cdot \paren*{\Delta_E + \sqrt{\Delta_V}}}
\]
Taken with the $\bOm{\frac{m \Delta_E}{T}}$ lower bound of Braverman,
Ostrovsky, and Vilenchik (ICALP 2013), and the $\bOm{\frac{m
\sqrt{\Delta_V}}{T}}$ lower bound of Kallaugher and Price (SODA 2017), our
algorithm is optimal up to $\log$ factors, resolving the complexity of 
a classic problem in graph streaming.
\end{abstract}

\section{Introduction}
Triangle counting is a fundamental problem in the study of graph algorithms, and one of the best studied in the field of graph streams. It
arises in the analysis of social networks~\cite{BHLP11}, web
graphs~\cite{EM02}, and spam detection~\cite{BBCG08}, among other applications.
From a theoretical perspective, it is of particular interest as the simplest
subgraph counting problem that cannot be solved by considering only \emph{local}
information about individual vertices. In other words, counting triangles
requires one to aggregate information between pairs of \textit{non-incident}
edges. 

In this paper, we present an optimal algorithm for counting triangles in the
\emph{graph streaming} setting, settling a long line of work on this problem. 

\paragraph{Graph Streaming.} In the (insertion-only) graph streaming setting, a
graph $G = (V,E)$ is received as a stream of edges $(\sigma_t)_{t=1}^m$ from
its edge set $E$ in an arbitrary order, and an algorithm is required to output
the answer to some problem at the end of the stream, using as little space as
possible\footnote{Other properties, such as update time, are also of interest,
but space has been the primary object of study in the theory of streaming.}.
Variants on this model include turnstile streaming (in which edges may be
deleted as well as inserted), and models that restrict what kind of state the
algorithm may maintain. 
    
\paragraph{Triangle Counting in Graph Streams.} The theoretical study of graph
streaming was initiated by \cite{BKS02}, who studied the problem of triangle
counting---the problem of estimating the number of three-cliques in a graph.
They demonstrated that, in general, sublinear space algorithms cannot exist for
this problem; namely, in the worst case any algorithm for triangle counting in
a stream must use $\bOm{n^2}$ bits of space.  On the other hand, they also
showed that, if one parameterizes in terms of the number of triangles $T$, one
can often beat this pessimistic lower bound. In particular, they gave an
algorithm that uses $\bOt{\paren{\frac{mn}{T}}^3}$ space to count triangles in
a graph with $m$ edges, $n$ vertices, and $T$ triangles, based on streaming
algorithms for approximating frequency moments.\footnote{Here we assume the
desired approximation is a multiplicative $(1 \pm \varepsilon)$ with success
probability $\delta$ for some positive constants $\varepsilon, \delta$.  For
most algorithms mentioned here, including our own, the dependence on
non-constant $\varepsilon, \delta$ will go as
$\varepsilon^{-2}\log\delta^{-1}$. We use $\bOt{\cdot}$ to suppress
logarithmic or polylogarithmic factors in the argument.}  Of course, it is
unreasonable
to assume that an algorithm knows the number of triangles $T$ in advance, as
this would make counting superfluous. Instead, it will suffice to have constant
factor bounds on the parameters in question.\footnote{One might hope to use
these parameters adaptively, giving an algorithm that uses more space the
smaller $T$ is without needing a lower bound at the start. However, this is in
general impossible, as a graph stream with few triangles and a graph stream
with many triangles may be indistinguishable until the last few updates.}

Several years later, the upper bound for this problem was improved to
$\bOt{\frac{mn}{T}}$ by \cite{BFLMS06}, while \cite{JG05} gave a
(non-comparable) algorithm that samples edges and stores neighborhoods of their
endpoints in order to find triangles, achieving $\bOt{\frac{md^2}{T}}$ space in
graphs with maximum degree $d$. Both algorithms were later subsumed by the
$\bOt{\frac{md}{T}}$ space algorithm of \cite{PTTW13}.

\paragraph{Additional Graph Parameters for Triangle Counting.} Despite the
large strides made by the aforementioned algorithms, none of them can achieve
sublinear space, even for graphs guaranteed to have as many as $\Omega(m)$
triangles, without bounding parameters of the graph other than $m$ and $T$.
This feature was shown to be necessary by \cite{BOV13}, who constructed a
family of graphs with either $0$ or $\bOm{m}$ triangles such that
distinguishing between the two requires $\bOm{m}$ space. However, this ``hard
instance'' is an unusual graph---every triangle in it shares a single edge.
This motivated the introduction of a new graph parameter $\Delta_E$, defined as
the maximum number of triangles which share a single edge in $G$.  When one
parameterizes in terms of $\Delta_E$, the lower bound of \cite{BOV13} becomes
$\bOm{\frac{m\Delta_E}{T}}$. As it happens, the maximum degree of graphs in
this family is also $\Delta_E$, so in particular this proves \cite{PTTW13} to
be optimal among algorithms parametrized by only $m, d$, and $T$.

The first algorithm to directly take advantage of the new parameter $\Delta_E$
was given by \cite{TKMF09}. Their algorithm is simple: keep each edge in the
stream independently with probability $p$, count the number of triangles $T'$
in the resulting graph, and output $T'p^{-3}$. They show that setting $p =
\bO{\frac{1}{T^{1/3}} + \frac{\Delta_E}{T}}$ suffices for an accurate count,
and thereby achieve $\bOt{m\paren*{\frac{1}{T^{1/3}} + \frac{\Delta_E}{T}}}$
space. 

This algorithm has another important feature: it is a \emph{non-adaptive
sampling} algorithm---whether it keeps an edge it sees does not depend on the
contents of the stream before the edge arrives. This means it can naturally
handle \emph{turnstile streams}, streams in which edges may be deleted as well
as inserted. In fact, through the use of sketches for $\ell_0$ sampling (see
e.g.~\cite{CJ19}) such algorithms may be converted into \emph{linear sketches},
which are algorithms that store only a linear function of their input (when
considered as a vector in $\bool^{|V| \choose 2}$).

An improved non-adaptive sampling algorithm was given in \cite{PT12}, which used
the technique of coloring vertices with one of $k$ colors, and keeping all
monochromatic edges. This improved the space usage of the algorithm to
$\bOt{m\paren*{\frac{1}{\sqrt{T}} + \frac{\Delta_E}{T}}}$. In \cite{KP17}, it
was shown (in combination with the existing lower bound of \cite{BOV13}) that
this is optimal, even for insertion-only algorithms---for every $T$ up to
$\Omega(m)$, a family of graphs exist with $\Delta_E \le 1$ and either $0$ or
$T$ triangles, such that $\bOm{\frac{m}{\sqrt{T}}}$ space is required to
distinguish the two.

However, as with the lower bound of \cite{BOV13}, the hard instance from
\cite{KP17} is a rather strange graph: this time every triangle shares a single
\emph{vertex}. Also similarly to the lower bound of \cite{BOV13}, the bound
from \cite{KP17} weakens as the maximum number of triangles sharing a single
vertex, a parameter denoted by $\Delta_V$, is restricted. In this case, when
parameterized by $\Delta_V$, the lower bound becomes
$\bOm{\frac{m\sqrt{\Delta_V}}{T}}$. This was accompanied in \cite{KP17} by an
algorithm that achieves $\bOt{m\paren*{\frac{1}{T^{2/3}} +
\frac{\sqrt{\Delta_V}}{T} + \frac{\Delta_E}{T}}}$ space, improving on
\cite{PT12} for graphs with $\Delta_V = o(T)$.

Subsequently, it was shown in \cite{KKP18} that any linear sketching algorithm
for counting triangles requires $\bOm{\frac{m}{T^{2/3}}}$ space, even if every
triangle is disjoint from every other and therefore $\Delta_E = \Delta_V \le
1$, and so the \cite{KP17} algorithm is optimal among linear sketches. By the
turnstile streaming-linear sketching equivalence of \cite{LNW14}, this suggests
that \cite{KP17} is also optimal among turnstile streaming
algorithms.\footnote{However, the \cite{LNW14} equivalence depends on rather
stringent conditions that a turnstile algorithm must satisfy. In \cite{KP20},
it was shown that relaxing these conditions allows turnstile streaming
algorithms for triangle counting that are closer to the result of \cite{JG05}.}

However, this leaves open the question of how hard triangle counting is for
algorithms that are \emph{not} required to handle deletions (i.e., the standard
``insertion-only'' model). We resolve this question (up to a log factor, as
with previous optimality results), by giving an optimal algorithm for triangle
counting in insertion-only streams.

\begin{figure}
      \centering
      \renewcommand{\arraystretch}{1.5} 
      \begin{tabular}{|l|l|l|} 
        \hline
        Paper & Space & Model\\
        \hline\hline

        \cite{PTTW13} & $\bOt{\frac{md}{T}}$ &      Insertion-only\\ 
        \cite{BOV13} & $\bOm{\frac{m\Delta_E}{T}}$ & 		Insertion-only\\ 
        \cite{KP17} & $\bOm{\frac{m\sqrt{\Delta_V}}{T}}$ &     Insertion-only\\ 
        \hline
        \cite{PT12} & $\bOt{m\paren*{\frac{1}{\sqrt{T}} +
        \frac{\Delta_E}{T}}}$ & Linear Sketching\\ 
        \cite{KP17} & $\bOt{m\left(\frac{1}{T^{2/3}} +
        \frac{\sqrt{\Delta_V}}{T} + \frac{\Delta_E}{T}\right)}$ & Linear
        Sketching\\
        \cite{KKP18} & $\bOm{\frac{m}{T^{2/3}}}$ & Linear Sketching\\
        \hline
        This work & $\bOt{\frac{m}{T}\paren{\sqrt{\Delta_V} + \Delta_E}}$ &
        Insertion-only\\
        \hline
      \end{tabular}
      \caption{Best known upper and lower bounds for triangle counting for
      insertion-only and linear sketching algorithms. $m$ is the number of
      edges, $T$ the number of triangles, $d$ the maximum degree, and
      $\Delta_E$, $\Delta_V$ are the maximum number of triangles sharing an
      edge or a vertex respectively.  Note that linear sketching upper bounds
      imply insertion-only upper bounds, while lower bounds are the opposite.}
      \label{fig:existing}
\end{figure}

\paragraph{Our Algorithm.} We give a new algorithm for counting triangles in insertion-only graph streams.
\begin{restatable}{thm}{optimaltcalg}
\label{thm:optimaltcalg}
For every $\varepsilon, \delta \in (0,1)$, there is an algorithm for
insertion-only graph streams that approximates the number of triangles in a
graph $G$ to $\varepsilon T$ accuracy with probability $1 - \delta$, using \[
\bO{\frac{m}{T}\paren*{\Delta_E
 + \sqrt{\Delta_V}}\log n\frac{\log\frac{1}{\delta}}{\varepsilon^2}}
\]  
bits of space, where $m$ is the number of edges in $G$, $T$ the number of
triangles, $\Delta_E$ the maximum number of triangles which share a single
edge, and $\Delta_V$ the maximum number of triangles which share a single
vertex.
\end{restatable}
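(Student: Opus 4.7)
The plan is to design a multi-stage sampling algorithm that exploits the temporal ordering of the insertion-only stream, since the $\Omega(m/T^{2/3})$ lower bound of KKP18 rules out beating the KP17 bound within the linear-sketching model even when $\Delta_E, \Delta_V$ are constant. At a high level, the algorithm samples a small collection of ``anchor'' edges, and for each anchor installs a tracker that watches the remainder of the stream for edges that extend it into a wedge and then close into a triangle. Such a conditional tracker is impossible in a linear sketch but natural in insertion-only streams, and it should let us avoid the $m/T^{2/3}$ term entirely.

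Concretely, I would proceed as follows. First, fix sampling rates $p_1, p_2$ and specify an algorithm that (i) stores each arriving edge as an anchor independently with probability $p_1$; (ii) for each anchor $e = (u,v)$, subsamples subsequent edges adjacent to $e$ at rate $p_2$ to form a pool of ``open wedges''; and (iii) increments a global counter whenever a later edge closes one of those wedges into a triangle. By construction each triangle is detected with probability exactly $p_1 p_2$, so $1/(p_1 p_2)$ times the counter is an unbiased estimator of $T$. Second, bound the space: the anchor set has expected size $O(m p_1)$, but the expected size of the wedge pool is more subtle and will likely require capping the per-anchor storage by a reservoir of appropriate size to prevent pathological anchors from dominating memory. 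Third, carry out a variance analysis by expanding $\Var{\cdot}$ into a sum over ordered pairs of triangles and grouping pairs by how they share structure.

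The variance analysis is where I expect the main difficulty to lie. Pairs of triangles that share no vertex contribute a benign covariance matching the first moment. Pairs sharing a common edge contribute a term controlled by $\Delta_E/(p_1 p_2)$, which will dictate the $m\Delta_E/T$ component of the bound. Pairs sharing only a vertex are the delicate case: a naive bound would produce a $\Delta_V$ dependence, and tightening this to $\sqrt{\Delta_V}$ appears to require choosing $p_2$ as a local function of the anchor's endpoint (e.g.\ a per-vertex reservoir of size roughly $\sqrt{\Delta_V}$), effectively exploiting a birthday-paradox-type bound on how often two vertex-sharing triangles collide inside the same tracker. Once variance $O(\text{estimate}^2 \cdot \eps^2)$ is established with sampling rates $p_1 p_2 = \bOt{(\Delta_E + \sqrt{\Delta_V})/T}$, Chebyshev's inequality together with standard median-of-means amplification gives the claimed $\bOt{\frac{m}{T}(\Delta_E + \sqrt{\Delta_V}) \cdot \frac{\log(1/\delta)}{\eps^2}}$ bound. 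The main obstacles I anticipate are (a) arranging the wedge-tracker so its total storage is truly $\bOt{m p_1}$ rather than scaling with the number of wedges in $G$, and (b) obtaining the $\sqrt{\Delta_V}$ rather than $\Delta_V$ dependence in the variance, which seems to demand structural care beyond independent subsampling.
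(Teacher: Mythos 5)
Your high-level template (unbiased estimator from sampling the first two edges of each triangle, variance decomposition over ordered pairs of triangles, then Chebyshev plus median-of-means) matches the paper, but the two places you flag as ``main obstacles'' are precisely the content of the proof, and the mechanism you propose---per-anchor edge trackers with independently subsampled wedge pools---does not obviously deliver either of them. First, space: the expected total size of your wedge pools is of order $p_1 p_2 \sum_v \deg(v)^2$, not $O(m p_1 p_2)$; around a single heavy vertex $v$, every anchor incident to $v$ maintains its own pool of edges incident to $v$, so storage is duplicated across anchors and can blow up well past $\frac{m}{T}(\Delta_E + \sqrt{\Delta_V})$ on exactly the hard instances (all triangles through one vertex, adversarial order). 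Capping pools by a reservoir, as you suggest, breaks the claim that each triangle is detected with probability exactly $p_1 p_2$, so unbiasedness and the variance bound would both need to be re-established, and you give no argument for that. Second, the $\sqrt{\Delta_V}$ term: your appeal to a ``birthday-paradox'' effect and a ``per-vertex reservoir of size roughly $\sqrt{\Delta_V}$'' is not worked out, and it is unclear what it would even mean, since $\Delta_V$ is a triangle statistic not tied to degrees. So as written this is a research plan with the key lemmas missing, not a proof.

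For comparison, the paper avoids per-anchor state entirely: it samples \emph{vertices} with a pairwise-independent hash at rate $p$ and edges at rate $q$, stores an arriving edge iff it is active \emph{and} has a sampled endpoint, and counts a triangle when its closing edge arrives and the center vertex of the stored wedge is sampled. Detection probability is exactly $pq^2$; space is $O(mpq\log n)$ unconditionally in expectation because each edge is kept with probability at most $2pq$, independent of the degree distribution (the storage is gated on the same vertex-level event that gates the counting, so heavy vertices do not cause duplication). The variance computation gives $\Var{\obT} \le T/pq^2 + T\Delta_E/pq + T\Delta_V/p$, where the vertex-sharing pairs are correlated only through the shared center vertex and hence cost a single factor $1/p$. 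Setting $p = \Delta_V/T$ and $q = \max\set{\Delta_E/\Delta_V, 1/\sqrt{\Delta_V}}$ makes the variance $O(T^2)$ and the space $O(\frac{m}{T}(\Delta_E + \sqrt{\Delta_V})\log n)$: the $\sqrt{\Delta_V}$ arises from the constraints $p \ge \Delta_V/T$ and $pq^2 \ge 1/T$ combined with space $mpq$, not from any birthday-type argument inside trackers. Your amplification step ($O(1/\eps^2)$ repetitions averaged, $O(\log\frac{1}{\delta})$ medians) does agree with the paper's final step.
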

\noindent
This matches, up to a log factor (and for constant $\varepsilon,
\delta$), the lower bounds of \cite{BOV13} and \cite{KP17}. It subsumes both
the algorithm of \cite{KP17} and the $\bOt{\frac{md}{T}}$ algorithm of
\cite{PTTW13}, as in any graph with max degree $d$, we have $\Delta_E \le d$
and $\Delta_V \le {d \choose 2}$. This closes the line of work discussed above
on the complexity of triangle counting in insertion-only streams. 

\paragraph{Other Related Work}
In the \emph{multi-pass} streaming setting, an algorithm is allowed to pass
over the input stream more than once. \cite{CJ14} shows multipass algorithms
take $\bTht{m/\sqrt{T}}$ space for arbitrary graphs, giving an algorithm for
two passes and a lower bound for a constant number of passes.  \cite{KMPT12}
shows a three pass streaming algorithm using $O(\sqrt{m} + m^{3/2}/T)$ space.
\cite{BC17} gave a $\bO{m^{3/2}/T}$ four pass algorithm. 

In the \emph{adjacency-list} model, in which each vertex's list of neighbors is
received as a block (and so in particular every edge is seen twice),
\cite{MVV16} gave a $\bO{m/\sqrt{T}}$ space one-pass algorithm,
while~\cite{KMPV19} gave $\bO{m/T^{2/3}}$ space 2-pass algorithm, as well as
tight (but conditional on open communication complexity conjectures) lower
bounds for both.

The problem has also been studied in the query model, in which case rather than
space the concern is minimizing time or query count. While this is a very
different setting, similar concerns around mitigating the impact of ``heavy''
vertices or edges arise. \cite{ELRS15} considered triangle
counting in this setting, which was extended by~\cite{ERS18} to general cliques
and~\cite{AKK19} to arbitrary constant-size subgraphs.

\section{Overview of the Algorithm}
At a high-level, many triangle counting algorithms in the literature adhere to
the following template: \textbf{(1)} design a sampling scheme to sample
triangles, \textbf{(2)} count the number of triangles which survive after this
sampling process, \textbf{(3)} rescale the number of empirically sampled
triangles by the expected fraction of surviving triangles to obtain an unbiased
estimator for $T$. 

As an example, one could sample each edge uniformly with probability $q$ (this
is the approach taken in \cite{TKMF09}).  Since for a triangle to survive all
three of its edges must be sampled, the expected number of triangles that
survive is $Tq^3$. Thus, rescaling the number of empirically sampled triangles
by $1/q^3$ yields an unbiased estimator. How large must $q$ be to make this
estimator accurate? In order to sample even a single triangle we need $Tq^3
\geq 1$, so clearly $q$ must be at least $1/T^{1/3}$. Moreover, if $\Delta_E$ is
the largest number of triangles that share an edge, there might be as few as
$T/\Delta_E$ ``heavy'' edges such that sampling a triangle requires sampling at
least one of them, and so $q$ must be at least $\Delta_E/T$. It turns out that,
up to constant factors, this is also sufficient, and so the space
needed by this algorithm is $\bOt{m\paren*{\frac{1}{T^{1/3}} +
\frac{\Delta_E}{T}}}$ bits.

The starting point for our algorithm is the following simple observation, which
can be seen as an optimization to the sampling algorithm above.  Given three
edges $uv,vw,wu \in E$ arriving in a stream in that order, once the first two
edges $uv,vw$ have been sampled and stored, upon seeing the ``completing''
edge $wu$, we will know that the triangle $uvw$ exists in $G$, and may count it
immediately---we get the closing edge of each triangle ``for free''. Now for a
single triangle to be sampled, we only need to sample the first two edges, and
so the probability of finding any given triangle improves to $q^2$, allowing a
space complexity of $\bOt{m\paren*{\frac{1}{\sqrt{T}} + \frac{m
\Delta_E}{T}}}$. However, when $\Delta_V = o(T)$, this is still weaker than
allowed by the $\bOm{\frac{m}{T}\paren{\sqrt{\Delta_V} + \Delta_E}}$ lower
bound that results from combining the results of \cite{BOV13,KP17}.

While the aforementioned algorithm is sub-optimal in general, notice that it
does match the lower bounds in the extreme case when $\Delta_V =T$, and all
triangles share a single vertex.  On the other hand, when $\Delta_V$ is
smaller, there are more `fully disjoint'' triangles in the graph.
Consequentially, we can afford to subsample by \emph{vertices}, as now dropping
a single vertex cannot lose too large a fraction of our triangles.   We may
sample vertices uniformly with some probability $p$, and deterministically
store all edges adjacent to at least one sampled vertex, again counting a
triangle whenever we observe an edge $wu$ closing a sampled pair $uv$, $vw$.
Each such triangle will be counted iff the ``first'' vertex $v$ of the triangle
is sampled, and these may be divided among as few as $T/\Delta_V$ ``heavy''
vertices, so $p$ must be at least $\Delta_V/T$.  This again turns out to be
sufficient, for a space usage of $\bOt{\frac{m \Delta_V}{T}}$ (note that any
pair of edges sharing an edge also share a vertex, so $\Delta_E \le \Delta_V$,
and thus this does not violate the known lower bounds). While this is an
improvement on the aforementioned adaptive edge-sampling scheme for small
$\Delta_V$, it becomes worse once $\Delta_V > \sqrt{T}$. 

The crucial insight behind our algorithm is to merge the two aforementioned
algorithms with a careful choice of parameterization. Specifically, we sample
both edges \textit{and} vertices, before counting triangles that we see closing
our sampled wedges. Specifically, we sample vertices $v \in V$ in the graph
with probability $p \in (0,1\rbrack$, and then ``activate'' each edge $e \in E$
with probability $q \in (0,1\rbrack$. When an edge $uv \in E$ arrives in the
stream, we store it iff $uv$ is active \textit{and} at least one of the
vertices $u$ or $v$ was sampled. We denote by $S$ the set of all edges stored
by the algorithm. Finally, when a closing edge $wu$ arrives that completes a
triangle with edges $uv,vw$ that were previously added to $S$, we check if the
vertex $v$ at the center of the wedge $uv,vw$ was sampled, and if so we
deterministically increment a counter $\mathbf{C}$.

Now observe that, for any given triangle $uvw$, the probability that $uvw$
causes $\mathbf{C}$ to be incremented is exactly $pq^2$. Thus, if we output the
quantity $\mathbf{C}/(pq^2)$ at the end of the stream, we obtain an unbiased
estimator for the number of triangles in $G$. 

Notice that when $p=1$ our algorithm reduces to the simpler edge-sampling
algorithm stated above.  At the other extreme, when $q =1$ our algorithm
reduces to the vertex sampling algorithm. Intuitively, our choice of the
parameters $p$ and $q$ are subject to the same constraints faced by the
aforementioned edge- and vertex-sampling algorithms. Firstly, $p$ must be at
least $\Delta_V/T$, otherwise the algorithm could miss a ``heavy'' vertex.
Furthermore, the product $pq$ must be at least $\Delta_E/T$, to avoid missing
``heavy'' edges, and $pq^2$ must be at least $1/T$ to find any triangles at
all.  Putting these bounds together, it follows that $q$ must be at least
$\max\set*{\frac{\Delta_E}{\Delta_V}, \frac{1}{\sqrt{\Delta_V}}}$.

As with all the algorithms discussed so far, this turns out to also be
sufficient---we demonstrate that by fixing the sampling parameters\footnote{As
mentioned earlier, $\Delta_E \le \Delta_V$, while $\Delta_V \le T$ holds
trivially. Thus $p,q$ are valid probabilities.}  \[
p = \frac{\Delta_V}{T}, \quad \quad\quad \quad  q\geq
\max\left\{\frac{\Delta_E}{\Delta_V}, \frac{1}{\sqrt{\Delta_V}} \right\}\] 
we obtain an algorithm using space
$\bO{\frac{m}{T}\paren*{\Delta_E + \sqrt{\Delta_V}}\log n}$
which yields an $\bO{T^2}$ variance estimator. We may therefore obtain a $(1
\pm \varepsilon)$ multiplicative estimate with probability $1 - \delta$ by
using $\bO{\frac{1}{\varepsilon^2}\log\frac{1}{\delta}}$ copies of this
algorithm.

Consequentially one obtains an algorithm matching, up to a log factor, the
lower bounds of \cite{BOV13,KP17}, with optimal space usage in terms of
$m, T,\Delta_E,\Delta_V$.

\section{The Triangle Counting Algorithm}
Let $G = (V,E)$ be a graph on $n$ vertices, received as a stream of undirected
edges, adversarially ordered. Let $m$ be the number of edges in the stream. We
write the stream as $\sigma = (\sigma_i)_{i=1}^m$, with each $\sigma_i \in E$.
We use $T$ to refer to the number of triangles in $G$, $\Delta_E$ to refer to
the maximum number of them sharing a single edge, and $\Delta_V$ the maximum
number sharing a single vertex. 

\begin{remark}
As with all streaming triangle counting algorithms, our algorithm will need to
be parametrized by statistics of the graph that cannot be known exactly without
trivializing the problem---in our case $T$, $\Delta_E$, and $\Delta_V$.
However, it will not be necessary to know these exactly---an upper bound on
$\Delta_E$, $\Delta_V$ and a lower bound on $T$ will be sufficient. If these
bounds are tight up to a constant, the complexity of our algorithm will be
unchanged, otherwise replace the parameters $T$, $\Delta_E$, $\Delta_V$ with
the respective upper and lower bounds. 
\end{remark}

\subsection{Description of the Algorithm}
We begin by choosing two hash functions $\fb : V \rightarrow \bool$ and $\gb :
E \rightarrow \bool$, which will serve as our ``vertex sampling'' and ``edge
sampling'' functions, respectively. We choose $\fb$ to be pair-wise
independent. $\gb$ will only be evaluated at most once for each edge,  and so
we may choose it to be fully independent. We pick the two functions $\fb,\gb$
such that 
 \[
\E{\fb(v)} = p
\]
for each $v \in V$ and
 \[
\E{\gb(e)} = q
\]
for each $e \in E$, where $p,q$ are parameters to be set later. Such a hash
function $\fb$ can be generated by taking a two-wise independent function
$\hb: V \to [M]$, where $M = \poly(n)$ is a sufficiently large
multiple of $1/p$, and setting $\fb(v) = 1$ whenever $\hb(v) \leq
pM$ (one can construct $\gb$ similarly using a four-wise independent hash
function). Such functions can be generated and stored in at most $O(\log n)$
bits of space \cite{carter1979universal}.

The algorithm will be simple: sample vertices with probability $p$, sample
incident edges with probability $q$. The formal description is given below in Algorithm \ref{alg:octc}.
\begin{algorithm}[H]
\caption{Triangle Counting Algorithm}
\label{alg:octc}
\begin{algorithmic}[1]
\Procedure{TriangleCounting}{$p,q$}
\State $S \gets \emptyset$ 
\State $\obT \gets 0$
\For{each update $wv$}
\For{$u \in V$}
\If{$\fb(u) > 0 \wedge uv, uw \in S$}
\State $\obT \peq 1/pq^2$
\EndIf
\EndFor
\If{$\gb(wv)(\fb(w) + \fb(v)) > 0$}
\State $S \gets S \cup \set{wv}$
\EndIf
\EndFor
\State \return $\obT$.
\EndProcedure
\end{algorithmic}
\end{algorithm}

\subsection{Analysis of the Algorithm}
\begin{lemma}
\label{lm:octcsp}
This algorithm uses $\bO{mpq\log n}$ bits of space.
\end{lemma}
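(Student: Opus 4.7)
\textbf{Proof Proposal for Lemma \ref{lm:octcsp}.}

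The plan is to inventory the three things the algorithm stores, observe that all but $S$ cost $\bO{\log n}$ bits, and then bound $\E{|S|}$ by a direct calculation using the independence structure of $\fb$ and $\gb$.

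First I would account for the non-$S$ state. The hash functions $\fb$ and $\gb$ each occupy $\bO{\log n}$ bits, as noted in the construction right before the algorithm. The counter $\obT$ is a nonnegative multiple of $1/(pq^2)$ whose value after $m$ updates is at most $m/(pq^2) \le \poly(n)$, so it also fits in $\bO{\log n}$ bits. Each edge stored in $S$ costs $\bO{\log n}$ bits to record its two endpoints, so the total space is $\bO{(|S|+1)\log n}$ and it remains to show $|S| = \bO{mpq}$ (in expectation, or with high probability, as needed downstream).

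Next I would compute the probability that a fixed edge $wv\in E$ lies in $S$ at the end of the stream. By the admission rule, $wv \in S$ iff $\gb(wv)=1$ and $\fb(w)+\fb(v)\ge 1$. Since $\gb$ is independent of $\fb$, this probability factors as
\[
 \Pr[\gb(wv)=1]\cdot\Pr[\fb(w)=1 \text{ or } \fb(v)=1] \;=\; q\cdot(2p - p^2) \;\le\; 2pq,
\]
where the second factor is evaluated using pairwise independence of $\fb$. Linearity of expectation then gives $\E{|S|}\le 2mpq$, yielding the claimed $\bO{mpq\log n}$ bound in expectation.

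Finally, I would promote this to a high-probability statement by conditioning on $\fb$. Once $\fb$ is fixed, let $E'\subseteq E$ denote the edges with at least one sampled endpoint; then $|S|$ conditional on $\fb$ is a sum of independent Bernoulli$(q)$ indicators (one per edge in $E'$), since $\gb$ is fully independent. A standard Chernoff bound controls $|S|$ around $q|E'|$ up to $\bO{\log(1/\delta)}$ additive slack. To bound $|E'|$ itself, pairwise independence of $\fb$ gives $\E{|E'|}\le 2mp$ and $\Var{|E'|} \le \bO{mp}$, so a Chebyshev step yields $|E'| = \bO{mp}$ with large constant probability. Combining the two bounds gives $|S| = \bO{mpq}$ with the desired probability, completing the space analysis.

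The only (minor) obstacle is that $\fb$ is merely pairwise independent, which prevents a one-shot Chernoff argument on $|S|$ directly; the two-step conditioning above sidesteps this cleanly, and if only a constant-probability space bound is required it can be replaced by a single application of Markov's inequality to $\E{|S|}\le 2mpq$.
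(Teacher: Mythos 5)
Your proposal is correct and matches the paper's argument: the paper likewise charges $\bO{\log n}$ bits to the counter and hash functions and bounds the expected size of $S$ by noting each edge is kept with probability at most $2pq$. The additional high-probability upgrade you sketch goes beyond what the paper proves (the lemma is only an in-expectation bound there); just note that your claim $\Var{|E'|} = \bO{mp}$ does not follow from pairwise independence of $\fb$ alone, since indicators for vertex-disjoint edges involve four values of $\fb$ and need not be uncorrelated --- but as you say, Markov on $\E{|S|} \le 2mpq$ already suffices.
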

\begin{proof}
Besides an $\bO{\log n}$ sized counter and the hash function $\fb$ ($\gb$ is
never evaluated more than once for an edge and thus does not need to be
stored), the algorithm maintains a set of edges. Each edge will be kept with
probability at most $2pq$ and takes $O(\log n)$ space to store, so the result
follows.
\end{proof}

We will write $T_{uvw}$ for the variable that is $1$ if $uvw$ is a triangle in
$G$ with its edges arriving in the order $(uv,uw,vw)$, and 0 otherwise, and so \[
T = \sum_{(u,v,w) \in V^3} T_{uvw}\text{.}
\] We will
write $\obT_{uvw}$ for the random variable that is $1/pq^2$ if $T_{uvw} = 1$
and $\fb(u)\gb(uv)\gb(uw) = 1$, and $0$ otherwise. We will therefore have \[ 
\obT = \sum_{(u,v,w) \in V^3} \obT_{uvw}.
\]

\begin{lemma}
\label{lm:octcex}
\[
\E{\obT} = T\text{.}
\]
\end{lemma}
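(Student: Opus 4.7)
The plan is to first verify the decomposition identity $\obT = \sum_{(u,v,w) \in V^3} \obT_{uvw}$ asserted just before the lemma, and then apply linearity of expectation to reduce the claim to a one-line probability computation per triple.

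To verify the decomposition, I would trace each increment of $\obT$ by the algorithm back to a unique ordered triple. The counter is incremented (by $1/pq^2$) only when some incoming edge with endpoints $\{v,w\}$ is being processed and some $u \in V$ satisfies $\fb(u) = 1$ together with $uv, uw \in S$. This forces $uvw$ to be a triangle whose edges arrived in the order $uv, uw, vw$ (with the labels $v$ versus $w$ pinned down by which of the first two arrived first), so $T_{uvw} = 1$. Conversely, any triangle contributes exactly one such ordered triple. Under the conditioning $\fb(u) = 1$, the membership rule $\gb(uv)(\fb(u) + \fb(v)) > 0$ simplifies to $\gb(uv) = 1$, and likewise $uw \in S$ iff $\gb(uw) = 1$. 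Hence the increment triggered by the triple $(u, v, w)$ occurs precisely when $T_{uvw} = 1$ and $\fb(u)\gb(uv)\gb(uw) = 1$, matching the indicator in the definition of $\obT_{uvw}$.

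With the decomposition in hand, linearity of expectation reduces the claim to evaluating $\E{\obT_{uvw}}$ for a single triple. For triples with $T_{uvw} = 0$ this is trivially $0$. Otherwise, the three events $\{\fb(u) = 1\}$, $\{\gb(uv) = 1\}$, $\{\gb(uw) = 1\}$ are mutually independent: $\fb$ is drawn independently of $\gb$, and the two $\gb$-evaluations occur at distinct edges $uv \neq uw$, so the pairwise independence of $\gb$ suffices. Their marginals are $p$, $q$, $q$ respectively, giving $\E{\obT_{uvw}} = (1/pq^2) \cdot pq^2 = 1 = T_{uvw}$. Summing over ordered triples yields $\E{\obT} = \sum_{(u,v,w)} T_{uvw} = T$.

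The step requiring the most care is the decomposition itself: one must check that each triangle is counted at most once (the apex $u$ is uniquely determined since $u$ cannot equal $v$ or $w$, both of which are endpoints of the incoming edge rather than of a previously-stored edge in the wedge) and that the algorithmic increment condition matches the indicator defining $\obT_{uvw}$ verbatim. Once that bookkeeping is in place, the rest is a single product of marginals.
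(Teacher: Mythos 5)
Your proof is correct and follows essentially the same approach as the paper: linearity of expectation over the ordered triples, with $\E{\obT_{uvw}} = (1/pq^2)\cdot pq^2 \cdot T_{uvw}$. The extra care you take in verifying that the algorithm's increment condition matches the indicator defining $\obT_{uvw}$ (which the paper asserts without proof just before the lemma) is sound and a welcome addition, not a departure.
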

\begin{proof}
For any $(u,v,w)$, $\fb(u)\gb(uv)\gb(uw) = 1$ with probability $pq^2$, so
$\E{\obT_{uvw}} = T_{uvw}$. Therefore,
\begin{align*}
\E{\obT} &= \sum_{(u,v,w) \in V^3} \E{\obT_{uvw}}\\
&= \sum_{(u,v,w) \in V^3}T_{uvw}\\
&= T
\end{align*}
\end{proof}

\begin{lemma}
\label{lm:octcvar}
\[
\Var{\obT} \le T/pq^2 + T\Delta_E/pq + T\Delta_V/p\text{.}
\]
\end{lemma}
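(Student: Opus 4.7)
The plan is to expand the variance as a double sum over pairs of ordered triples $(u,v,w)$, $(u',v',w')$ and classify each pair by its structural overlap. Writing
\[
\Var{\obT} = \sum_{(u,v,w)} \sum_{(u',v',w')} \mathrm{Cov}\bigl(\obT_{uvw}, \obT_{u'v'w'}\bigr),
\]
I first note that $\obT_{uvw}$ depends only on $\fb(u)$, $\gb(uv)$, $\gb(uw)$. Since $\fb$ is pairwise independent and $\gb$ is fully independent, the covariance can be non-zero only if $u=u'$ or $\{uv,uw\}\cap\{u'v',u'w'\}\ne\emptyset$. Using the fact that each triangle in $G$ corresponds to a unique triple with $T_{uvw}=1$, I would partition the potentially non-zero-covariance pairs into four disjoint classes: (A) the diagonal $(u,v,w)=(u',v',w')$; (B) same apex $u=u'$ with disjoint non-apex vertices; (C) same apex $u=u'$ with exactly one shared non-apex vertex (equivalently, one shared stored edge); and (D) distinct apexes $u\ne u'$ sharing a stored edge, in which case the shared edge $\{u,v\}=\{u',v'\}$ forces the two apexes to be its two endpoints.

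For each class I would compute $\E{\fb(u)\fb(u')\gb(uv)\gb(uw)\gb(u'v')\gb(u'w')}$ using the idempotency of $\bool$-valued indicators and factoring over independent blocks, then compare with $\E{\obT_{uvw}}\E{\obT_{u'v'w'}}=T_{uvw}T_{u'v'w'}$. The resulting per-pair covariance bounds are at most $1/(pq^2)$, $1/p$, $1/(pq)$, and $1/q$ for classes (A)--(D), respectively. To count pairs in each class, I would use the identities $\sum_u t_u = T$ and $\sum_{u,v} N_{u,v} = 2T$, where $t_u$ is the number of triangles with apex $u$ and $N_{u,v}$ the number of triangles with apex $u$ containing edge $uv$, together with $t_u \le \Delta_V$ and $N_{u,v} \le \Delta_E$. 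This gives at most $\Delta_V T$ ordered pairs in (B) and $O(\Delta_E T)$ ordered pairs in each of (C) and (D); the diagonal (A) directly contributes $\sum T_{uvw}/(pq^2) = T/(pq^2)$.

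Summing across the four classes (and using $1/q \le 1/(pq)$ to absorb the (D) contribution into the middle term) yields the stated bound, up to absolute constants. The main obstacle I anticipate is the care required in the case analysis: one must verify both that the four classes are exhaustive---in particular, that pairs of triangles sharing only a non-apex vertex of both, or sharing only an edge that happens to be the \emph{closing} edge of one or both triangles, indeed contribute zero covariance because the corresponding $\fb$ and $\gb$ evaluations remain on distinct vertices and edges and are therefore independent---and that the classes are disjoint so that pair counts are not double-counted.
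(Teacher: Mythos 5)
Your proposal is correct and takes essentially the same route as the paper: both expand the second moment (equivalently, the covariance sum) over ordered pairs of triangles, classify pairs by whether they share the apex and/or one of the two ``stored'' edges, bound each class's per-pair contribution by $1/pq^2$, $1/pq$, $1/p$, or $1$, and count the classes using $\Delta_E$ and $\Delta_V$. Your class (D) (shared stored edge, distinct apexes, contribution $1/q \le 1/pq$) is a subcase the paper silently folds into its shared-edge case by asserting $u=x$ there, so your treatment is if anything slightly more careful; the only cost is that your counts carry small absolute constants (e.g.\ a triangle has two stored edges, giving up to $2\Delta_E T$ ordered pairs in class (C)), a looseness the paper shares and which the downstream application of the lemma absorbs harmlessly.
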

\begin{proof}
Consider any (ordered) pair of triples $(u,v,w), (x,y,z) \in V^3$ such that
$T_{uvw}T_{xyz} = 1$. 

If $(u,v,w) = (x,y,z)$, $\obT_{uvw}\obT_{xyz} = 1/p^2q^4$ with probability
$pq^2$ and $0$ otherwise, so \[ 
\E{\obT_{uvw}\obT_{xyz}} = \E{\obT_{uvw}^2} = 1/pq^2.
\]
At most $T$ such pairs of triples can exist. 

Now, if $\abs{\set{uv,uw} \cap \set{xy,xz}} = 1$, then $u = x$ and so
$\obT_{uvw}\obT_{xyz} = 1/p^2q^4$ iff $\fb(u) = 1$ and $\gb(e) = 1$ for all $e$
in the size-3 set $\set{uv,uw,xy,xz}$, which happens with probability $pq^3$,
and so \[
\E{\obT_{uvw}\obT_{xyz}} = 1/pq.
\]
Each triangle has at most $\Delta_E$ other triangles it shares an edge with, so
there are at most $T\Delta_E$ such pairs.

If $\set{uv,uw} \cap \set{xy,xz} = \emptyset$ but $u = x$, then
$\obT_{uvw}\obT_{xyz} = 1/p^2q^4$ iff $\fb(u) = 1$ and $\gb(e) = 1$ for all $e$
in the size-4 set $\set{uv,uw,xy,xz}$, which happens with probability $pq^4$,
and so \[
\E{\obT_{uvw}\obT_{xyz}} = 1/p.
\]
Each triangle has at most $\Delta_V$ other triangles it shares a vertex with,
so there are at most $T\Delta_V$ such pairs.

Finally, if $\set{u,v,w} \cap \set{x,y,z} = \emptyset$, then
$\obT_{uvw}\obT_{xyz} = 1/p^2q^4$ iff $\fb(u) = 1$, $\fb(x) = 1$, and $\gb(e) =
1$ for all $e$ in the size-4 set $\set{uv,uw,xy,xz}$, which happens with
probability $p^2q^4$, and so \[ 
\E{\obT_{uvw}\obT_{xyz}} = 1.
\]
At most $T^2$ such pairs can exist. Therefore,
\begin{align*}
\E{\obT^2} &= \sum_{(u,v,w) \in V^3}\sum_{(x,y,z) \in V^3} \E{\obT_{uvw}\obT_{xyz}}\\
&= \sum_{(u,v,w) \in V^3} \E{\obT_{uvw}^2} + \sum_{(u,v,w) \in V^3} 
\left(\sum_{\substack{(x,y,z) \in
V^3\\\abs{\set{uv, uw} \cap \set{xy,xz}} = 1}} \E{\obT_{uvw}\obT_{xyz}} +\right.\\
&\phantom{=}\left. \sum_{\substack{(x,y,z) \in V^3\\\set{uv, uw} \cap
\set{xy,xz} = \emptyset\\ u = x}} \E{\obT_{uvw}\obT_{xyz}} +
\sum_{\substack{(x,y,z) \in V^3\\\set{u,v,w} \cap \set{x,y,z} = \emptyset}}
\E{\obT_{uvw}\obT_{xyz}} \right)\\
&\le T/pq^2 + T\Delta_E/pq + T\Delta_V/p + T^2
\end{align*}
by adding the previously established bounds for all four kinds of pair. The
lemma then follows from the fact that $\Var{\obT} = \E{\obT^2} - \E{\obT}^2 =
\E{\obT^2} - T^2$.
\end{proof}

We may now prove Theorem~\ref{thm:optimaltcalg}.
\optimaltcalg*
\begin{proof}
We may assume $\Delta_V$ (more specifically, the upper bound we have on it) is
at least 1, as otherwise we already know $G$ to be triangle-free. By
Lemmas~\ref{lm:octcex} and~\ref{lm:octcvar}, we can set $p = \Delta_V/T$, $q =
\max\set*{\Delta_E/\Delta_V, 1/\sqrt{\Delta_V}}$ and run
Algorithm~\ref{alg:octc} to obtain an estimator with expectation $T$ and
variance at most $3T^2$. (These will give valid probabilities, as $\Delta_V \le
T$ by definition, and $\Delta_E$ is at least $\Delta_V$, as any pair of
triangles sharing an edge also share a vertex.) By Lemma~\ref{lm:octcsp}, this
will take $\bO{\frac{m}{T}\paren*{\Delta_E + \sqrt{\Delta_V}}\log
n}$ space. 

Repeating this $36/\varepsilon^2$ times and taking the mean will give an
estimator with expectation $T$ and variance at most $\varepsilon T^2/2$. We can
then repeat \emph{this} $\bO{\log \frac{1}{\delta}}$ times and take the median
to get an estimator that will be within $\varepsilon T$ of $T$ with probability
$1 - \delta$.
\end{proof}

\section{Conclusion}
We resolve the complexity of triangle counting in the insertion-only streaming
model, in terms of the well-studied natural graph parameters $m, T, \Delta_E,
\Delta_V$.  The results of~\cite{KKP18} resolved this problem for the
\emph{linear sketching} model, and a result of~\cite{LNW14} states that, under
certain conditions, turnstile streaming algorithms are equivalent to linear
sketches, suggesting that the algorithm of~\cite{KP17} is optimal for turnstile
streams as well. However,~\cite{KP20} showed that an insertion-only algorithm
of~\cite{JG05} can be converted into a turnstile streaming algorithm provided
that, for instance, the length of the stream is reasonably constrained (with
the number of insertions and deletions no more than $O(1)$ times the final size
of the graph). It remains open whether this algorithm can be converted into a
turnstile algorithm under such constraints, or whether the bounded-stream
turnstile complexity of triangle counting is somewhere between insertion-only
and linear sketching.

Another natural question is about the choice of parameters---the algorithm of
\cite{PT12} is optimal in terms of $m, T$, and $\Delta_E$, but not when the
parameter $\Delta_V$ is considered. Are there natural extensions of the
parametrization that allow for better results? The results of \cite{KP17}
include a proof of instance-optimality for a restricted subclass of
non-adaptive sampling algorithms, but for more general algorithms it is clear
that there are at least \emph{unnatural} extensions of the parametrization that
help. For instance, if all the edges of a graph are guaranteed to belong to
high-degree vertices, but all the triangles belong to low-degree vertices, a
simple filtering strategy allows an improvement.

In particular, the lower bound instances of \cite{BOV13, KP17} are both sparse
graphs, and so cannot be constructed if $n$ is constrained to be small relative
to $m$ or $T$. For the most dense graphs (with $\Theta\paren*{n^2}$ edges and
$\Theta\paren*{n^3}$ triangles) our algorithm and the algorithm of~\cite{KP17}
are already trivially optimal up to log factors, since they use only
$\polylog(n)$ bits. However, the complexity landscape for more general dense
graphs remains open.

\bibliographystyle{alpha}
\bibliography{refs}

\end{document}